\newtheorem{rem}{Remark}
\newtheorem{prop}{Proposition}
\newtheorem{lem}{Lemma}
\newsavebox\myboxA
\newsavebox\myboxB
\newlength\mylenA
\newcommand*\xoverline[2][0.75]{%
    \sbox{\myboxA}{$\m@th#2$}%
    \setbox\myboxB\null
    \ht\myboxB=\ht\myboxA%
    \dp\myboxB=\dp\myboxA%
    \wd\myboxB=#1\wd\myboxA
    \sbox\myboxB{$\m@th\overline{\copy\myboxB}$}
    \setlength\mylenA{\the\wd\myboxA}
    \addtolength\mylenA{-\the\wd\myboxB}%
    \ifdim\wd\myboxB<\wd\myboxA%
       \rlap{\hskip 0.5\mylenA\usebox\myboxB}{\usebox\myboxA}%
    \else
        \hskip -0.5\mylenA\rlap{\usebox\myboxA}{\hskip 0.5\mylenA\usebox\myboxB}%
    \fi}
\newtheorem{defn}{Definition}
\newtheorem{assum}{Assumption}
\let\NAT@parse\undefined
\begin{document}

\title{Learning Robustness with Bounded Failure:\\ An Iterative MPC Approach}

\author{Monimoy Bujarbaruah,$^{\dagger,*}$ Akhil Shetty,$^{\dagger,*}$\\ Kameshwar Poolla,$^*$ and Francesco Borrelli\thanks{$\dagger$ authors contributed equally to this work. The authors are with University of California Berkeley, Berkeley, CA, USA; E-mails: \tt\scriptsize{\{monimoyb, shetty.akhil, poolla, fborrelli\}@berkeley.edu.}}
}

\maketitle

\begin{abstract}
We propose an approach to design a Model Predictive Controller (MPC) for constrained Linear Time Invariant  systems performing an iterative task. The system is subject to  an additive disturbance, and the goal is to learn to satisfy state and input constraints robustly. Using disturbance measurements after each iteration, we construct  \emph{Confidence Support} sets, which contain the true support of the disturbance distribution with a given probability. 
As more data is collected, the Confidence Supports converge to the true support of the disturbance. This enables design of an MPC controller that avoids conservative estimate of the disturbance support, while simultaneously bounding the probability of constraint violation. 
The efficacy of the proposed approach is then demonstrated with a detailed numerical example. 
\end{abstract}




\section{Introduction}\label{sec:intro}
As data-driven decision making and control becomes ubiquitous \cite{recht2018tour, tanaskovic2017data, rosolia2018data, pourbabaee2020robust}, system identification methods are being integrated with control algorithms for control of uncertain dynamical systems. The uncertainty in these systems can be typically attributed to two factors: $(i)$ model uncertainty (eg. modeling mismatch and inaccuracies), and $(ii)$ exogenous disturbances (eg. sensor noise). For such uncertain systems subject to state and input constraints, Model Predictive Control (MPC) \cite{morari1999model, mayne2000constrained, borrelli2017predictive} is a commonly used approach for ensuring robust constraint satisfaction. 

The field of Adaptive MPC  \cite{tanaskovic2014adaptive, lorenzenAutomaticaAMPC, Khler2019LinearRA, AsemiBujarECC, kohlerNonlAMPC, bujarTAC} deals with learning the model uncertainty to improve controller performance over time. These methods rely upon Set Membership approaches, which assume known set based bounds on the exogenous disturbances. As these disturbance supports are actually unknown in practice, conservative over-approximations are used for control design. This results in the controller either being infeasible, or incurring higher costs by following highly sub-optimal trajectories. 
This motivates learning the disturbance support over time in order to improve controller performance. In such cases, it is necessary to allow the possibility of \emph{failure}, i.e, violation of imposed constraints. Such violations are acceptable for certain non safety critical robotic applications. 

To that end, numerous works in MPC literature have considered constructing probabilistic approximations of both the model uncertainty and disturbance support \cite{zhang:margellos:goulart:lygeros:13, hewing2017cautious, soloperto2018learning, koller2018learning}, allowing room for violations of imposed constraints with a certain probability. Methods such as \cite{hewing2017cautious, soloperto2018learning, koller2018learning}, utilize Gaussian Process (GP) Regression to model and update the uncertainty in the system. However, they have no theoretical bounds for rate of constraint violations by the closed loop system over time. 

Assuming the presence of \emph{only} exogenous disturbances,  \cite{zhang:margellos:goulart:lygeros:13} addresses this issue by constructing disturbance support sets offline using the scenario approach \cite[Chapter~12]{tempo2012randomized}.
This approach involves solving a scenario program with potentially large number of samples, which is computationally expensive. Moreover, the rate of constraint violation is dependent on the number of disturbance samples available offline. In certain settings (for eg., iterative tasks), it is often the case that one starts the controller having observed no samples apriori. While learning the disturbance support over time in such cases, it is desirable to have a user-specified upper bound for probability of failure over all time. The approach in \cite{zhang:margellos:goulart:lygeros:13} is unable to satisfy such an upper bound at all times, since the required number of samples could be unavailable during operation. 

In this paper, we present an approach to design an MPC controller for constrained LTI systems performing an iterative task \cite{rosolia2017learningj}. Like \cite{zhang:margellos:goulart:lygeros:13} we consider an additive disturbance in the system, under no uncertainty in the system matrices. Instead of considering a conservative over-approximation of the disturbance support such as \cite{tanaskovic2014adaptive, lorenzenAutomaticaAMPC, luAccCannon}, we learn this set from observed disturbance samples. While doing so, we guarantee a user-specified upper bound on the probability of failure over all iterations. The algorithm further highlights an exploration-exploitation trade-off well known in bandits literature \cite{gupta2018active, gupta2020unified, gupta2020correlated, gupta2021best, gupta2021multi, gupta2022structured, cho2020bandit}. Our main contributions can be summarized as:
\begin{itemize}
    \item We introduce the notion of a \emph{Confidence Support}, which is guaranteed to contain the true disturbance support with a specified probability. Constructing and updating the Confidence Supports after each iteration is computationally cheap, unlike \cite{zhang:margellos:goulart:lygeros:13}. 
    \item Using these Confidence Supports, we attempt robust MPC design and demonstrate satisfaction of desired upper bound on probability of failure in each iteration. 
    For any value of user-specified upper bound on probability of failure, the controller is able to learn robust satisfaction of imposed constraints asymptotically, without suffering conservatism that is inherent to existing approaches \cite{tanaskovic2014adaptive, lorenzenAutomaticaAMPC, luAccCannon}.
\end{itemize}

\section{Problem Formulation}\label{sec:probF}
We consider uncertain linear time-invariant systems of the form:
\begin{equation}\label{eq:unc_system}
	x_{t+1} = Ax_t + Bu_t + w_t,
\end{equation}
where $x_t\in \mathbb{R}^{d}$ is the state at time step $t$, $u_t\in\mathbb{R}^{m}$ is the input, and $A$ and $B$ are known system matrices of appropriate dimensions. At each time step $t$, the system is affected by an independently and identically distributed (i.i.d.) \ random disturbance $w_t \stackrel{\mathrm{iid}}{\sim} \mathcal{P}$ with a convex and compact support $\mathbb{W} \subset \mathbb{R}^{d}$. We aim to satisfy state and input constraints on the system robustly. We define $H_x \in \mathbb{R}^{s \times d}$, $h_x \in \mathbb{R}^s$, $H_u \in \mathbb{R}^{o \times m}$ and $h_u \in \mathbb{R}^o$. We can then write the imposed constraints for all time steps $t \geq 0$ as:
\begin{align}\label{eq:constraints_nominal}
	\mathbb{Z} & := \{(x,u): H_x x 
	\leq h_x,~ {H}_u u \leq h_u \}.
\end{align}
Throughout the paper, we assume that system \eqref{eq:unc_system} performs the same task repeatedly for $J$ number of times. Each task execution is referred to as \emph{iteration}. Our goal is to design a controller that, at each  iteration $j$, solves the finite horizon robust optimal control problem:
\begin{equation}\label{eq:generalized_InfOCP}
	\begin{array}{clll}
		\hspace{0cm} V^{j,\star}(x_S) = \\ [1ex]
		\displaystyle\min_{u_0^{j},u_1^{j}(\cdot),\ldots} & \displaystyle\sum\limits_{t=0}^{T-1} \ell \left( \bar{x}_t^{j}, u_t^{j}\left(\bar{x}_t^{j}\right) \right) \\[1ex]
		\text{s.t.}  & x_{t+1}^{j} = Ax_t^{j} + Bu_t^{j}(x_t^{j}) + w_t^{j},\\[1ex]
		& \bar{x}_{t+1}^{j} = A\bar{x}_t^{j} + Bu_t^{j}(\bar{x}_t^{j}),\\[1ex]
		& H_x x_t^{j} \leq h_x,\\[1ex]
		& H_u u_t^{j} \leq h_u,\\[1ex]
		&\forall w_t^{j} \in \mathbb W,\ \\[1ex]
		&  x_0^{j} = x_S,\ t=0,1,\ldots,(T-1),
	\end{array}
\end{equation}
where $x_t^{j}$, $u_t^{j}$ and $w_t^{j}$ denote the realized system state, control input and disturbance at time $t$ of the $j^{\mathrm{th}}$ iteration respectively,  and $(\bar{x}_t^{j}, u_t^j(\bar{x}_t^j))$ denote the disturbance-free nominal state and corresponding nominal input. Notice that \eqref{eq:generalized_InfOCP} minimizes the nominal cost over a time horizon of length $T \gg 0$ in any $j^\mathrm{th}$ iteration with $j \in [J]$. Here we use $[J]$ to denote the set $\{1, 2, \dots, J\}$. We point out that, as  system \eqref{eq:unc_system} is uncertain, the optimal control problem \eqref{eq:generalized_InfOCP} consists of finding $[u_0^{j},u_1^{j}(\cdot),u_2^{j}(\cdot),\ldots]$, where $u_t^{j}: \mathbb{R}^{d}\ni x_t^{j} \mapsto u_t^{j} = u_t^{j}(x_t^{j})\in\mathbb{R}^{m}$ are state feedback policies. As task duration $T \gg 0$, for computational tractability we try to approximate a solution to the optimal control problem \eqref{eq:generalized_InfOCP}, by solving a simpler constrained optimal control problem with prediction horizon $N \ll T$ in a receding horizon fashion. 

In this work, we consider the support $\mathbb{W}$ of disturbance $w^j_t$ to be an unknown convex and compact set. We estimate $\mathbb{W}$ using observed disturbance samples. At the start of iteration $j$, the estimated support is  $\hat{\mathbb{W}}^j$. 
\section{Iterative MPC Problem}
The MPC controller solves a finite horizon optimal control problem at each time step $t$ in the $j^\mathrm{th}$ iteration. Since the disturbance support $\mathbb{W}$ is unknown and is estimated with $\hat{\mathbb{W}}^j$ built from data, robust satisfaction of \eqref{eq:constraints_nominal} along the iteration is not guaranteed. This implies that the closed loop task execution might fail. We will formally define this notion of \emph{failure} after defining the closed loop controller in this section.

We attempt to design a robust MPC controller in the $j^\mathrm{th}$ iteration with our best estimate $\hat{\mathbb{W}}^j$ of disturbance support $\mathbb{W}$, by solving the following optimal control problem:
\begin{equation} \label{eq:MPC_R_fin}
	\begin{aligned}
	  V_{t \rightarrow t+N}^{\mathrm{MPC},j}&(x^j_t, \hat{\mathbb{W}}^j, \hat{\mathcal{X}}^j_N)  :=	\\
	& \min_{U^j_t(\cdot)} ~~ \sum_{k=t}^{t+N-1} \ell(\bar{x}^j_{k|t}, v^j_{k|t}) + Q(\bar{x}^j_{t+N|t})\\
		& ~~~\text{s.t.}~~~~~~    x^j_{k+1|t} = Ax^j_{k|t} + Bu^j_{k|t} + w^j_{k|t},\\
		& ~~~~~~~~~~~~\bar{x}^j_{k+1|t} = A\bar{x}^j_{k|t} + Bv^j_{k|t},\\
		&~~~~~~~~~~~~u^j_{k|t} = \sum \limits_{l=t}^{k-1}M^j_{k,l|t} w^j_{l|t}  + v^j_{k|t},\\
		&~~~~~~~~~~~~ H_x x^j_{k|t} \leq h_x,\\
		&~~~~~~~~~~~~ H_u u^j_{k|t} \leq h_u,\\
	    &~~~~~~~~~~~~ {x}^j_{t+N|t} \in \hat{\mathcal{X}}_N^j,\\
	    & ~~~~~~~~~~~~ \forall w^j_{k|t} \in \hat{\mathbb{W}}^j,\\
        &~~~~~~~~~~~~ \forall k = \{t,\ldots,t+N-1\},\\
        	&~~~~~~~~~~~~x^j_{t|t} = \bar{x}^j_{t|t} = x^j_t,
	\end{aligned}
\end{equation}
where in the $j^\mathrm{th}$ iteration, $x^j_t$ is the measured state at time $t$, $x^j_{k|t}$ is the prediction of state at time $k$, obtained by applying predicted input policies $U^j_t(\cdot) = [u^j_{t|t},\dots,u^j_{k-1|t}]$ to system~\eqref{eq:unc_system} and $\{\bar{x}^j_{k|t}, v^j_{k|t}\}$ with $v^j_{k|t} = u^j_{k|t}(\bar{x}^j_{k|t})$ denote the disturbance-free nominal state and corresponding input respectively. The MPC controller minimizes the cost over the predicted disturbance free nominal trajectory $\Big \{ \{\bar{x}^j_{k|t}, v^j_{k|t}\}_{k=t}^{t+N-1}, \bar{x}^j_{t+N|t} \Big \}$, which comprises of the positive definite stage cost $\ell(\cdot, \cdot)$, and the terminal cost $Q(\cdot)$.  
Notice, the above uses affine disturbance feedback parametrization \cite{Goulart2006} of input policies.  
We use state feedback to construct terminal set $\hat{\mathcal{X}}^j_N = \{x \in \mathbb{R}^d: \hat{Y}^j x \leq \hat{z}^j,~\hat{Y}^j \in \mathbb{R}^{r^j \times d},~\hat{z}^j \in \mathbb{R}^{r^j}\}$, which is the $(T-N)$ step robust reachable set \cite[Chapter~10]{borrelli2017predictive} to set of state constraints in \eqref{eq:constraints_nominal}, obtained with a state feedback controller $u=Kx$, dynamics \eqref{eq:unc_system} and constraints \eqref{eq:constraints_nominal}. This set has the properties:
\begin{equation}\label{eq:term_set_DF}
    \begin{aligned}
    &\hat{\mathcal{X}}^j_N \subseteq \{x|(x,Kx) \in \mathbb{Z}\},\\
    &H_x((A+BK)^ix + \sum \limits_{\tilde{i}=0}^{i-1} (A+BK)^{i-\tilde{i}-1}w_{\tilde{i}}) \leq h_x,\\
    &H_u(K ( (A+BK)^ix + \sum \limits_{\tilde{i}=0}^{i-1} (A+BK)^{i-\tilde{i}-1}w_{\tilde{i}} )) \leq h_u,\\
    &\forall x\in \hat{\mathcal{X}}^j_N,~\forall w_{i} \in \hat{\mathbb{W}}^j,~\forall i=1,2,\dots,(T-N).
    \end{aligned}
\end{equation}
After solving \eqref{eq:MPC_R_fin}, in closed loop, we apply
\begin{equation}\label{eq:inputCL_DF}
	u^j_t = v^{j,\star}_{t|t}
\end{equation}
to system \eqref{eq:unc_system}. We then resolve the problem \eqref{eq:MPC_R_fin} again at the next $(t+1)$-th time step, yielding a receding horizon strategy. 

\begin{rem}
Computing sets such as \eqref{eq:term_set_DF} can become expensive in certain scenarios, where for example the number of constraints in $\mathbb{Z}$, or the dimension $d$ of states is too large. In such cases one may opt for data driven methods such as \cite{rosolia2017learningj, kimPstuff} to construct these terminal sets. 
\end{rem}

\begin{assum}[Well Posedness]\label{ass:well}
We assume that given an initial state $x_S$, optimization problem \eqref{eq:MPC_R_fin} is feasible at all times $0 \leq t \leq T-1$ with true uncertainty support $\hat{\mathbb{W}}^j = \mathbb{W}$ for all iterations $j \in [J]$.
\end{assum}

Since $\mathbb{W}$ is unknown and is being estimated with $\hat{\mathbb{W}}^j$ in the $j^\mathrm{th}$ iteration, we might lose the feasibility of \eqref{eq:MPC_R_fin} during $0 \leq t \leq T-1$. We formalize this with the following definition:

\begin{defn}[State Constraint Failure]\label{def:stateConstraintFailure}
A State Constraint Failure at time step $t$ in iteration $j$ is the event 
\begin{align}\label{eq:scf}
   \mathrm{[SCF]}^j_t:~ H_x x^j_t > h_x.
\end{align}
That is, a State Constraint Failure implies the violation of imposed constraints \eqref{eq:constraints_nominal} by system \eqref{eq:unc_system} in closed loop with MPC controller \eqref{eq:inputCL_DF}. 
\end{defn}

\begin{rem}\label{rem:failure_time}
Let $T^j < T$ denote the time step in the $j^\mathrm{th}$ iteration when a State Constraint Failure occurs. In that case, problem \eqref{eq:MPC_R_fin} becomes infeasible at $T^j$. We then stop the $j^\mathrm{th}$ iteration and update $\hat{\mathbb{W}}^{j}  \stackrel{\mathrm{update}}{\longrightarrow} \hat{\mathbb{W}}^{j+1}$. When $T^j = T$, it denotes a successful iteration without any State Constraint Failure.  
\end{rem}

The probability of State Constraint Failure $[\mathrm{SCF}]^j_t$ is a function of the sets $\hat{\mathbb{W}}^j$. 
In certain safety critical applications, it is necessary to keep the probability of $[\mathrm{SCF}]^j_t$ very low, whereas in other applications a higher probability can be tolerated. 
However, it is not enough to focus on probability of $[\mathrm{SCF}]^j_t$ alone. For example, a low probability of $[\mathrm{SCF}]^j_t$ can be achieved by considering worst-case apriori estimates for $\mathbb{W}$ but it results in deteriorated controller ``performance". Thus, it is desirable to not only keep probability of $[\mathrm{SCF}]^j_t$ low, but also maintain satisfactory controller performance during successful iterations (as defined in Remark \ref{rem:failure_time}). Let the closed loop cost of a successful iteration $j$ be denoted by
\begin{align}\label{eq:actual_cl_cost_sim}
    \hat{\mathcal{V}}^j(x_S, w^{1:j}) = \sum \limits_{t=0}^{T-1} \ell({x}^j_t,v^{j,\star}_{t|t}).
\end{align}
where notation $w^{1:j}$ denotes the set $\mathop{\cup} \limits_{i=1}^{j} \mathop{\cup}
\limits_{t=0}^{T-1}  w^i_t$. We use average closed loop cost $\mathbb{E}[\hat{\mathcal{V}}^j(x_S, w^{1:j})]$ to quantify controller performance. The goal is to lower the \emph{performance loss} defined as
\begin{align} \label{eq:cl_loop_diff}
    [\mathrm{PL}]^j = \mathbb{E} [\hat{\mathcal{V}}^j(x_S, w^{1:j})] - \mathbb{E} [{\mathcal{V}}^{\star}(x_S, w^{1:j})],
\end{align}
where $\mathbb{E} [{\mathcal{V}}^{\star}(x_S, w^{1:j})]$ denotes the average closed loop cost of the $j^\mathrm{th}$ iteration if $\mathbb{W}$ had been known, i.e., $\hat{\mathbb{W}}^j = \mathbb{W}$ for all $j \in [J]$.

In the next section, we introduce two design specifications (D1) and (D2) to formalize this joint focus on lowering probability of State Constraint Failure and maintaining satisfactory controller performance. We then show how the sets $\hat{\mathbb{W}}^j$ are constructed according to these specifications.
\section{Learning Robustness with Bounded Failure}\label{sec:lrbf}
We consider the following design specifications: 
\begin{enumerate}[(D1)]
    \item Closed loop MPC control law \eqref{eq:inputCL_DF} ensures that system \eqref{eq:unc_system} in the $j^\mathrm{th}$ iteration satisfies a user specified upper bound $\alpha$ on probability of State Constraint Failure (Definition \ref{def:stateConstraintFailure}),
    \item Minimize $[\mathrm{PL}]^j$ (as defined in \eqref{eq:cl_loop_diff}) over all iterations $j \in [J]$ while satisfying (D1).
\end{enumerate}
For satisfaction of (D1) we require,
\begin{align}\label{eq:iter_fail}
    \mathbb{P}(H_x x^j_t > h_x) \leq \alpha.
\end{align}
Since the above probability is difficult to compute, we consider an alternative notion of failure in order to upper bound the probability of State Constraint Failure. 
\begin{defn}[Disturbance Support Failure]
A Disturbance Support Failure at time step $t$ in iteration $j$ is the event
\begin{align}\label{eq:dsf}
    \mathrm{[DSF]}^j_t:~ w^j_t \notin \hat{\mathbb{W}}^j.
\end{align}
\end{defn}
As the MPC controller \eqref{eq:MPC_R_fin} is robust to all $w^j_t \in \hat{\mathbb{W}}^j$, we have $\mathrm{[SCF]}^j_t \subseteq \mathrm{[DSF]}^j_t$. Therefore, probability of Disturbance Support Failure is an upper bound for probability of State Constraint Failure, i.e., $\mathbb{P}(\mathrm{[SCF]}^j_t) \leq \mathbb{P}(\mathrm{[DSF]}^j_t)$. Therefore, we focus on the following specification:
\begin{align} \label{eq:probDSFUpperBd}
    \mathbb{P}(w^j_t \notin \hat{\mathbb{W}}^j) \leq \alpha.
\end{align}
In the next few sections, we discuss how such sets $\hat{\mathbb{W}}^j$ can be constructed based on disturbance samples observed during the iterative task.

\subsection{Need for Distributional Assumption on $\mathcal{P}$}
Consider i.i.d. samples $Z_{1:n} = (Z_1,\dots,Z_n)$ from an unknown distribution $\mathcal{P}$. All we know about the distribution is that its support $\mathbb{S}$ is convex and compact. Our objective is to find an estimate $\hat{\mathbb{S}}(Z_{1:n})$ for the support $\mathbb{S}$ such that for a user specified failure probability $\alpha$,
\begin{align} \label{eq:needProbFailure}
    \mathbb{P}(\bar{Z} \notin \hat{\mathbb{S}}(Z_{1:n})) \leq \alpha, 
\end{align}
where $\bar{Z}$ is an i.i.d. draw from $\mathcal{P}$. The convex hull $C^{\mathrm{hull}}(Z_{1:n})$ of observed samples $(Z_1,\dots,Z_n)$ is an intuitive estimator for the support $\mathbb{S}$. It is clear that $C^{\mathrm{hull}}(Z_{1:n}) \subseteq \mathbb{S}$. Let $\mathcal{A} \setminus \mathcal{B}$ denote the set $\{y \ \vert \ y \in \mathcal{A}~\textrm{and } y \notin \mathcal{B}\}$. It turns out that $\mathbb{P}(\mathbb{S} \setminus C^{\mathrm{hull}}(Z_{1:n})) \to 0$ as $n \to \infty$ \cite{brunel2018methods}, i.e., $C^{\mathrm{hull}}(Z_{1:n})$ asymptotically converges to the support $\mathbb{S}$. However, $C^{\mathrm{hull}}(Z_{1:n})$ may not satisfy \eqref{eq:needProbFailure} for an arbitrary user specified failure probability $\alpha$. In order to do so, $C^{\mathrm{hull}}(Z_{1:n})$ may need to be scaled up in a suitable manner. We illustrate through a simple example that an upper bound on failure probability cannot be guaranteed without additional assumptions on the distribution $\mathcal{P}$.

Consider an unknown univariate distribution $\mathcal{P}$ with support $\mathbb{S} \subset \mathbb{R}$. Suppose we observe i.i.d. samples $Z_{1:4} = \{-1,0.5,1,-0.2\}$ from this distribution. The objective is to find $\hat{\mathbb{S}}(Z_{1:4})$ that satisfies \eqref{eq:needProbFailure} with $\alpha = 0.1$. As we know that $\mathbb{S}$ is convex and compact, it is clear that $C^{\mathrm{hull}}(Z_{1:4}) = [-1,1] \subseteq \mathbb{S}$. However, it is unclear whether $\hat{\mathbb{S}} =  C^{\mathrm{hull}}(Z_{1:4})$ would satisfy \eqref{eq:needProbFailure} with $\alpha = 0.1$. Consider two potential distributions $\mathcal{P}_1, \mathcal{P}_2$ with densities $p_1(\cdot), p_2(\cdot)$ respectively such that
\begin{align*}
    p_1(z) &= 0.4 \mathbb{I}\{|z| \leq 1 \} + 0.1\mathbb{I}\{ 1 \leq |z| \leq 2 \}, \\
  p_2(z) &= 0.4\mathbb{I}\{ |z| \leq 1 \} + 0.01\mathbb{I}\{ 1 \leq |z| \leq 11 \},
\end{align*}
where $\mathbb{I}\{\cdot\}$ denotes the indicator function. Note that both these distributions are equally likely to generate the observed samples as they have the same distribution on $C^{\mathrm{hull}}(Z_{1:4}) = [-1,1]$. Observe that $\hat{\mathbb{S}} = 1.5 C^{\mathrm{hull}}(Z_{1:4})$ satisfies \eqref{eq:needProbFailure} for $\mathcal{P}_1$, whereas $\hat{\mathbb{S}}$ has to be set to $6 C^{\mathrm{hull}}(Z_{1:4})$ to get the same probability of failure for $\mathcal{P}_2$.
Thus, without any additional assumption about the distribution, it is not possible to give any probability of failure guarantees just based on sets constructed from observed samples.
\begin{assum}\label{ass:distrib_family}
We assume that the unknown distribution $\mathcal{P}$ defined in Section~\ref{sec:probF} belongs to a finite dimensional parametric family $\{ \mathcal{P}_\theta: \theta \in \Theta, \Theta \subseteq \mathbb{R}^l \}$.
\end{assum}
We next explore how to construct the sets $\hat{\mathbb{W}}^j$ using Assumption~\ref{ass:distrib_family}, so that design specification (D1) is satisfied. For that purpose, we introduce the notion of Confidence Supports which are closely related to the notion of confidence intervals in classical statistics. Subsequently in Section~\ref{sec:alg_section} we present our algorithm.

\subsection{Confidence Support of a Distribution} \label{subsec:confSupport}
Consider i.i.d. samples $Z_{1:n} = (Z_1,\dots,Z_n)$ from a distribution $\mathcal{P}_\theta$ parametrized by $\theta \in \mathbb{R}$, i.e., $Z_i \stackrel{\mathrm{iid}}{\sim} \mathcal{P}_\theta$. In classical statistics, the notion of confidence interval provides a convenient way to characterize the uncertainty of parameter $\theta$ from the observed samples $Z_{1:n}$.
\begin{defn}[Confidence Interval]\label{defn:confInterval}
A set $\mathcal{C}(Z_{1:n})$ is a $(1-\alpha)$-confidence interval for the parameter $\theta$ of distribution $\mathcal{P}_\theta$ if
\begin{align}
    \mathbb{P}(\theta \notin \mathcal{C}(Z_{1:n})) \leq \alpha.
\end{align}
\end{defn}
If $\theta \in \mathbb{R}^d, \ d > 1$, then the term \emph{confidence region} is used for the set $\mathcal{C}(Z)$ as defined above.
\begin{rem}
Note that $\mathcal{C}(Z)$ is a random set as it is a function of the collection of random samples $Z_{1:n}$, whereas $\theta$ is an unknown deterministic parameter. We refer the reader to \cite[Chapter~9]{keener2011theoretical} for an introduction to confidence intervals and methods to compute them.
\end{rem}
We now introduce an analogous definition for the support of a distribution.
\begin{defn}[Confidence Support]\label{def:confSupport}
A set $\mathcal{S}(Z_{1:n})$ is a $(1- \alpha)$-Confidence Support of a distribution $\mathcal{P}_{\theta}$ with support $\mathbb{S}_\theta$ if
\begin{align}
    \mathbb{P}(\mathbb{S}_\theta \subseteq \mathcal{S}(Z_{1:n})) \geq 1-\alpha, 
\end{align}
i.e., $\mathcal{S}(Z_{1:n})$ contains the support $\mathbb{S}_\theta$ of $\mathcal{P}_\theta$ with probability greater than or equal to $(1-\alpha)$.
\end{defn}
  Using the above notion of Confidence Supports, we now demonstrate how the disturbance support estimates $\hat{\mathbb{W}}^j$ (as defined in iterative MPC problem \eqref{eq:MPC_R_fin}) can be computed based on observed disturbance samples.

\subsection{Computing $\hat{\mathbb{W}}^j$}\label{sec:alg_section}
Consider i.i.d. disturbance samples $w^j_t \sim \mathcal{P}_{\theta}, \ \theta \in \mathbb{R}^l$ with support $\mathbb{W}$. Let $w^j_t(q)$ denote the $q^\mathrm{th}$ element of $w^j_t \in \mathbb{R}^d$.  Let $w^{1:j}$ denote the set $ \mathop{\cup} \limits_{i=1}^{j} \mathop{\cup}
\limits_{t=0}^{T^i}  w^i_t$. Recall that $[d]$ denotes the set $\{1, 2, \dots, d\}$. We make the following simplifying assumption:

\begin{assum}\label{ass:iid_indep}
The elements of random vector $w^i_t \in \mathbb{R}^d$ are independently distributed,
\begin{align}
   w^j_t(q) \sim \mathcal{P}^q_{\theta_q}, \  q \in [d],
\end{align}
where $\theta = (\theta_1, \dots, \theta_d)$ and $\{\mathcal{P}^q_{\theta_q}: \theta_q \in \Theta_q, \ \Theta_q \subset \mathbb{R}^{l/d}\}$ is the corresponding parametric family for the $q^\mathrm{th}$ element. Remark~\ref{rem:gen_confSet} contains a discussion about the general case. 
\end{assum}
At the start of the $j^{\mathrm{th}}$ iteration, the collection of samples $w^{1:j-1}$ would have been observed. As the uncertainty distribution $\mathcal{P}_{\theta}$ is completely specified by $\theta$, we can compute a $(1-\alpha)$-Confidence Support $\hat{\mathbb{W}}^j\big(w^{1:j-1}\big)$ by computing confidence regions for the individual parameters $(\theta_1, \dots, \theta_d)$. Note that the confidence regions and supports are functions of the observed disturbance samples $w^{1:j-1}$. For notational convenience, we represent such sets without explicitly showing this dependence.
\begin{lem}\label{lem:union_bd}
Let $\hat{\Theta}^j_q$ be a $(1-\alpha_q)$-confidence region for $\theta_q$. Consider $\hat{\mathbb{W}}^j_q = \mathop{\bigcup}_{\bar{\theta}_q \in \hat{\Theta}^j_q } \mathrm{Supp}(\mathcal{P}^q_{\bar{\theta}_q}) $, where $\mathrm{Supp}(\mathcal{P}^q_{\bar{\theta}_q})$ denotes the support of
distribution $\mathcal{P}^q_{\bar{\theta}_q}$. Then, $\hat{\mathbb{W}}^j = \hat{\mathbb{W}}^j_1 \times \dots \times \hat{\mathbb{W}}^j_d$ is a $(1-\sum_q \alpha_q)$-Confidence Support of $\mathcal{P}_\theta$.

\end{lem}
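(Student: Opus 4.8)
The plan is to prove the two claims that together constitute the lemma: first, that each $\hat{\mathbb{W}}^j_q$ is a $(1-\alpha_q)$-Confidence Support for the marginal distribution $\mathcal{P}^q_{\theta_q}$ in the sense of Definition~\ref{def:confSupport}, and second, that forming the product set and invoking a union bound converts the per-coordinate confidence levels $\alpha_q$ into the joint level $\sum_q \alpha_q$.

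For the first step, I would argue as follows. By construction, $\hat{\mathbb{W}}^j_q = \bigcup_{\bar\theta_q \in \hat\Theta^j_q} \mathrm{Supp}(\mathcal{P}^q_{\bar\theta_q})$. On the event $\{\theta_q \in \hat\Theta^j_q\}$, the true parameter $\theta_q$ is one of the indices in the union, so $\mathrm{Supp}(\mathcal{P}^q_{\theta_q}) \subseteq \hat{\mathbb{W}}^j_q$; that is, the true marginal support $\mathbb{W}_q := \mathrm{Supp}(\mathcal{P}^q_{\theta_q})$ is contained in $\hat{\mathbb{W}}^j_q$. Hence $\{\theta_q \in \hat\Theta^j_q\} \subseteq \{\mathbb{W}_q \subseteq \hat{\mathbb{W}}^j_q\}$, and since $\hat\Theta^j_q$ is a $(1-\alpha_q)$-confidence region, $\mathbb{P}(\mathbb{W}_q \subseteq \hat{\mathbb{W}}^j_q) \geq \mathbb{P}(\theta_q \in \hat\Theta^j_q) \geq 1-\alpha_q$.

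For the second step, note that under Assumption~\ref{ass:iid_indep} the true support decomposes as a product $\mathbb{W} = \mathbb{W}_1 \times \dots \times \mathbb{W}_d$, and likewise $\hat{\mathbb{W}}^j = \hat{\mathbb{W}}^j_1 \times \dots \times \hat{\mathbb{W}}^j_d$. The key set-theoretic observation is that a product is contained in a product whenever each factor is contained in the corresponding factor: $\bigcap_{q=1}^d \{\mathbb{W}_q \subseteq \hat{\mathbb{W}}^j_q\} \subseteq \{\mathbb{W} \subseteq \hat{\mathbb{W}}^j\}$. Taking complements and applying the union bound,
\begin{align}
\mathbb{P}(\mathbb{W} \not\subseteq \hat{\mathbb{W}}^j) \leq \mathbb{P}\Big( \bigcup_{q=1}^d \{\mathbb{W}_q \not\subseteq \hat{\mathbb{W}}^j_q\} \Big) \leq \sum_{q=1}^d \mathbb{P}(\mathbb{W}_q \not\subseteq \hat{\mathbb{W}}^j_q) \leq \sum_{q=1}^d \alpha_q,
\end{align}
so $\mathbb{P}(\mathbb{W} \subseteq \hat{\mathbb{W}}^j) \geq 1 - \sum_q \alpha_q$, which is exactly the assertion that $\hat{\mathbb{W}}^j$ is a $(1-\sum_q\alpha_q)$-Confidence Support of $\mathcal{P}_\theta$.

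The proof is essentially a bookkeeping exercise, so there is no deep obstacle; the one place to be careful is the containment step in the second part. One should verify explicitly that $\mathbb{W} = \mathrm{Supp}(\mathcal{P}_\theta)$ really is the Cartesian product of the marginal supports — this uses independence of the coordinates (Assumption~\ref{ass:iid_indep}), since for a product measure the support is the product of the supports of the factors — and that the monotonicity property ``$A_q \subseteq B_q$ for all $q$ implies $\prod_q A_q \subseteq \prod_q B_q$'' is applied in the correct direction. With those two elementary facts in hand, combining the per-coordinate confidence guarantee from the first step with the union bound yields the result directly.
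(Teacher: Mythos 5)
Your proof is correct and follows essentially the same route as the paper: reduce containment of the product support to per-coordinate containments, relate each of those to the event $\theta_q \in \hat{\Theta}^j_q$, and apply the union bound. If anything, your version is slightly more careful, since you only use the (correct) one-directional implication $\{\theta_q \in \hat{\Theta}^j_q\} \subseteq \{\mathrm{Supp}(\mathcal{P}^q_{\theta_q}) \subseteq \hat{\mathbb{W}}^j_q\}$, whereas the paper writes this step as an equality of events.
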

\begin{proof}
By definition, $\mathbb{W} = \mathrm{Supp}(\mathcal{P}^1_{\theta_1}) \times \dots \times \mathrm{Supp}(\mathcal{P}^d_{\theta_d})$. As $\hat{\mathbb{W}}^j = \hat{\mathbb{W}}^j_1 \times \dots \times \hat{\mathbb{W}}^j_d$, we have
\begin{align}
    \mathbb{P}(\mathbb{W} \not\subseteq \hat{\mathbb{W}}^j) &= \mathbb{P}(\mathop{\cup}_{q=1}^d \  \mathrm{Supp}(\mathcal{P}^q_{\theta_q}) \not\subseteq \hat{\mathbb{W}}^j_q) \nonumber \\
    &= \mathbb{P}(\mathop{\cup}_{q=1}^d \ \theta_q \notin \hat{\Theta}^j_q), \nonumber \\
    &\leq \sum_{q=1}^d \  \mathbb{P}(\theta_q \notin \hat{\Theta}^j_q), \label{eq:lemUnionBd}\\
    &\leq \sum_{q=1}^d \  \alpha_q, \label{eq:confIntervalBd}
\end{align}
where \eqref{eq:lemUnionBd} follows from the union bound and \eqref{eq:confIntervalBd} follows from $\hat{\Theta}^j_q$ being a $(1-\alpha_q)$-confidence region for $\theta_q$.
\end{proof}
Thus, a $(1-\alpha)$-Confidence Support can be constructed using $(1-\alpha_q)$-confidence regions by setting $\alpha_q = \frac{\alpha}{d}$. We now show that such a Confidence Support has a bounded probability of Disturbance Support Failure, as defined in \eqref{eq:dsf}.
\begin{prop}\label{thm:failureProbThm}
Let $\hat{\mathbb{W}}^j$ be a $(1-\alpha)$-Confidence Support of $\mathcal{P}_\theta$ computed using samples $w^{1:j-1}$. Then, we have 
\begin{align}
    \mathbb{P}(w^j_t \notin \hat{\mathbb{W}}^j) \leq \alpha,  \ 0 \leq t \leq T-1.
\end{align}
\end{prop}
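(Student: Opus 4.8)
The plan is to reduce the claim directly to the defining property of a Confidence Support, using only one extra elementary fact. Recall from Definition~\ref{def:confSupport} that, since $\hat{\mathbb{W}}^j$ is a $(1-\alpha)$-Confidence Support of $\mathcal{P}_\theta$ built from $w^{1:j-1}$, we have $\mathbb{P}(\mathbb{W} \subseteq \hat{\mathbb{W}}^j) \geq 1-\alpha$, where the randomness is over the sample $w^{1:j-1}$ that determines the set $\hat{\mathbb{W}}^j$. The only other ingredient needed is that a fresh disturbance $w^j_t \sim \mathcal{P}_\theta$ almost surely takes values in its own support, i.e. $\mathbb{P}(w^j_t \in \mathbb{W}) = 1$, which holds because $\mathbb{W}$ is by definition the (closed) set carrying all the mass of $\mathcal{P}_\theta$.

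The key step is the event inclusion
\[
\{w^j_t \notin \hat{\mathbb{W}}^j\} \subseteq \{\mathbb{W} \not\subseteq \hat{\mathbb{W}}^j\} \cup \{w^j_t \notin \mathbb{W}\},
\]
which is immediate: if $w^j_t \notin \hat{\mathbb{W}}^j$ yet $w^j_t \in \mathbb{W}$, then $\mathbb{W}$ contains a point outside $\hat{\mathbb{W}}^j$, so $\mathbb{W} \not\subseteq \hat{\mathbb{W}}^j$. Applying monotonicity and subadditivity of $\mathbb{P}$ then yields
\[
\mathbb{P}(w^j_t \notin \hat{\mathbb{W}}^j) \leq \mathbb{P}(\mathbb{W} \not\subseteq \hat{\mathbb{W}}^j) + \mathbb{P}(w^j_t \notin \mathbb{W}) \leq \alpha + 0 = \alpha,
\]
uniformly over $t \in \{0,\dots,T-1\}$, since neither step used anything specific about $t$. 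This is exactly the assertion. Note that independence of $w^j_t$ from $w^{1:j-1}$ (guaranteed by the i.i.d.\ assumption) is consistent with the argument but is not actually required; and when $\hat{\mathbb{W}}^j$ is constructed coordinatewise via Lemma~\ref{lem:union_bd}, the same proof applies verbatim with $\alpha = \sum_q \alpha_q$.

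The argument is essentially a one-liner, so there is no substantive obstacle. The only point demanding a little care is the measure-theoretic fact that $w^j_t$ lies in $\mathbb{W}$ almost surely, which is what lets the ``fresh sample falls outside its support'' term be discarded; if instead $\mathbb{W}$ were merely known to be some superset of the true support, the same proof still delivers the bound, with that term vanishing for the same reason.
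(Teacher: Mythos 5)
Your proposal is correct and follows essentially the same route as the paper: both arguments reduce $\mathbb{P}(w^j_t \notin \hat{\mathbb{W}}^j)$ to $\mathbb{P}(\mathbb{W} \not\subseteq \hat{\mathbb{W}}^j) \leq \alpha$ using the fact that $w^j_t \in \mathbb{W}$ almost surely, the only cosmetic difference being that the paper phrases this via the law of total probability (conditioning on $\mathbb{W} \subseteq \hat{\mathbb{W}}^j$) while you phrase it via an event inclusion and the union bound.
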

\begin{proof}
Note that both $w^j_t$ and $\hat{\mathbb{W}}^j$ are random. Using the law of total probability, we have
\begin{align}
    \mathbb{P}(w^j_t \notin \hat{\mathbb{W}}^j) &= \mathbb{P}(w^j_t \notin \hat{\mathbb{W}}^j|\mathbb{W} \subseteq \hat{\mathbb{W}}^j)\mathbb{P}(\mathbb{W} \subseteq \hat{\mathbb{W}}^j) \nonumber \\
    & \hspace{0.5cm} + \mathbb{P}(w^j_t \notin \hat{\mathbb{W}}^j|\mathbb{W} \not\subseteq \hat{\mathbb{W}}^j)\mathbb{P}(\mathbb{W} \not\subseteq \hat{\mathbb{W}}^j), \nonumber \\
    &= \mathbb{P}(w^j_t \notin \hat{\mathbb{W}}^j|\mathbb{W} \not\subseteq \hat{\mathbb{W}}^j)\mathbb{P}(\mathbb{W} \not\subseteq \hat{\mathbb{W}}^j), \nonumber\\
    &\leq \mathbb{P}(\mathbb{W} \not\subseteq \hat{\mathbb{W}}^j), \label{eq:confSupportUpperBd} \\
    &\leq \alpha, \label{eq:confSupportThm}
\end{align}
where \eqref{eq:confSupportThm} follows from the fact that $\hat{\mathbb{W}}^j$ is a $(1-\alpha)$-Confidence Support of $\mathcal{P}_\theta$.
\end{proof}

\begin{rem}\label{rem:gen_confSet}
The Confidence Supports constructed in this section also hold in the case that the elements of $w^j_t$ are dependent. However, as we are not exploiting the correlations across dimensions, the above approach would yield a hyper-rectangle outer-approximation to the actual support as iteration $j$ goes to infinity. Confidence regions for the parameter $\theta$ rather than individual elements $\theta_q$ are needed in such a case to converge to the true support, but such regions are in general difficult to compute.
\end{rem}

\begin{rem}\label{rem:asympW}
As long as the confidence regions $\hat{\Theta}^j_q$ converge to the true parameter $\theta_q$ in probability, the Confidence Supports asymptotically converge to the true uncertainty support, i.e., $\hat{\mathbb{W}}^j \rightarrow \mathbb{W}$ in probability. The MPC controller \eqref{eq:inputCL_DF} thus asymptotically learns to satisfy \eqref{eq:constraints_nominal} \emph{robustly}.
\end{rem}

\subsection{The LRBF Algorithm}
We present our Learning Robustness from Bounded Failure (LRBF) algorithm which uses Confidence Supports $\hat{\mathbb{W}}^j$ from Section~\ref{sec:alg_section} in MPC optimization problem \eqref{eq:MPC_R_fin}. This guarantees satisfaction of \eqref{eq:iter_fail} (i.e., design requirement (D1)) by system \eqref{eq:unc_system} in closed loop with controller \eqref{eq:inputCL_DF}.

\begin{algorithm}
\begin{algorithmic}
\State{\textbf{Inputs:}} $\mathbb{Z}, \hat{\mathbb{W}}^1, x_S$. 

\FOR{$j$ = 2, \dots, $J$} \\\vspace{0.1cm}
\emph{Computing Confidence Support $\hat{\mathbb{W}}^j$ } \vspace{0.1cm}
\FOR{$q = 1, \dots, d$}
\State{Compute $(1-\frac{\alpha}{d})$-confidence region $\hat{\Theta}^j_q$} for $\theta_q$
\State{Compute $\hat{\mathbb{W}}^j_q = \mathop{\cup}_{\bar{\theta}_q \in \hat{\Theta}_q} \mathrm{Supp}(\mathcal{P}^q_{\bar{\theta}_q})$}
\ENDFOR

\State{Set $\hat{\mathbb{W}}^j = \hat{\mathbb{W}}^j_1 \times \dots \times \hat{\mathbb{W}}^j_d$}  \\ \vspace{0.1cm}
\emph{Solving MPC problem \eqref{eq:MPC_R_fin} using $\hat{\mathbb{W}}^j$} \vspace{0.1cm}
\FOR{$t$ = 0,1, \dots, $T-1$}
\State{Apply $v^{j,\star}_{t|t}$ from \eqref{eq:inputCL_DF} with $\hat{\mathbb{W}}^j$ as uncertainty}

\ENDFOR

\ENDFOR
\end{algorithmic}
\caption{Learning Robustness with Bounded Failure (LRBF)}
\label{alg:RER}
\end{algorithm}

\begin{rem}\label{rem:first_stepFeas}
We assume that for all iterations $j \in [J]$, at time step $t=0$, MPC problem \eqref{eq:MPC_R_fin} is feasible with disturbance supports $\hat{\mathbb{W}}^j$ constructed in Algorithm~\ref{alg:RER}. 
This guarantees that we are able to collect at least one data point in each iteration to update Confidence Support $\hat{\mathbb{W}}^j$ while satisfying \eqref{eq:iter_fail}. 
In case such an assumption is not satisfied,   
$\hat{\mathbb{W}}^j$ can be scaled down (for eg., by increasing $\alpha$). 
\end{rem}

\begin{rem} \label{rem:softCon}
The convergence of $\hat{\mathbb{W}}^j$ to the true support $\mathbb{W}$ can be sped up by keeping the iteration running until time step $T$ despite State Constraint Failure. This can be done by introducing slack variables in MPC problem \eqref{eq:MPC_R_fin}. Details can be found in the Appendix. 
\end{rem}


\subsection{Case Studies}
We now demonstrate our approach for two parametric distribution families: $(i)$ uniform distribution, and $(ii)$ truncated normal distribution. 

\subsubsection{Uniform Distribution.} 
Consider the uniform distribution with hyper-rectangle support $\mathbb{W} = [-\theta_1,\theta_1] \times \dots \times [-\theta_d,\theta_d]$. Then we have,
\begin{align*}
    \mathcal{P}^q_{\theta_q} &= \mathrm{Unif}(-\theta_q, \theta_q), \  q \in [d].
\end{align*}
Let $\bar{w}^j(q) = \max_{\bar{w} \in w^{1:j-1}} |\bar{w}|, \ q \in [d]$ and let $\mathcal{T}^j = \sum_{i=1}^{j-1} T^i$. The following set turns out to be a $(1-\frac{\alpha}{d})$-confidence interval for $\theta_q$,
\begin{align*}
   \hat{\Theta}^j_q = \Bigg[\bar{w}^j(q), \frac{\bar{w}^j(q)}{\big(\frac{\alpha}{d} \big)^{1/\mathcal{T}^j}} \Bigg].
\end{align*}
A derivation of the above confidence interval can be found in the Appendix. Using Lemma~\ref{lem:union_bd}, we have the $(1-\alpha)$-Confidence Support $\hat{\mathbb{W}}^j = \hat{\mathbb{W}}^j_1 \times \dots \times \hat{\mathbb{W}}^j_d$, where
\begin{align} \label{eq:unifConfInterval}
     \hat{\mathbb{W}}^j_q = \Bigg[- \frac{\bar{w}^j(q)}{\big( \frac{\alpha}{d} \big)^{1/\mathcal{T}^j}}, \frac{\bar{w}^j(q)}{\big( \frac{\alpha}{d} \big)^{1/\mathcal{T}^j}} \Bigg].
\end{align}
\begin{rem}
This can be extended to the asymmetric case with $\mathcal{P}^q_{\theta_q} = \mathrm{Unif}(-\theta_q^1, \theta_q^2)$. In this case, there is no analytical expression for the Confidence Support but it can be computed numerically.
\end{rem}
 
\subsubsection{Truncated Normal Distribution.} Consider the truncated normal distribution with mean $\mu_q$, variance $\sigma_q^2$, and support $[\mu_q - 3\sigma_q, \mu_q + 3\sigma_q]$, i.e.,
\begin{align*}
    \mathcal{P}^q_{\theta_q} = \mathcal{N}_{\mathrm{trunc}}(\mu_q, \sigma_q^2, 3), \ q \in [d].
\end{align*}
As the distribution is fully specified by $\mu_q$ and $\sigma_q$, we have $\theta_q = [\mu_q, \sigma_q]^\top$. Although it is difficult to derive exact confidence intervals in this case, approximate confidence intervals for $\mu_q$ and $\sigma_q$ can be computed via the Bootstrap \cite[Chapter~13]{efron1986bootstrap}. Let $[\mu^j_{\min}(q), \mu^j_{\max}(q)]$ and $[\sigma^j_{\min}(q), \sigma^j_{\max}(q)]$ denote the $(1-\frac{\alpha}{2d})$-Bootstrap confidence intervals for $\mu_q$ and $\sigma_q$ respectively. By union bound, we have the following approximate $(1-\frac{\alpha}{d})$-confidence interval for $\theta_q$,
\begin{align*}
    \hat{\Theta}^j_q &= \{[\mu, \sigma]^\top | \ \mu \in [\mu^j_{\min}(q), \mu^j_{\max}(q)], \ \\ 
    &\hspace{0.5cm}\sigma \in [\sigma^j_{\min}(q), \sigma^j_{\max}(q)] \},
\end{align*}
which gives us an approximate $(1-\alpha)$-Confidence Support $\hat{\mathbb{W}}^j = \hat{\mathbb{W}}^j_1 \times \dots \times \hat{\mathbb{W}}^j_d$, where 
\begin{align} \label{eq:truncNormalConfInterval}
    \hat{\mathbb{W}}^j_q = [\mu^j_{\min}(q) - 3 \sigma^j_{\max}(q), \mu^j_{\max}(q) + 3 \sigma^j_{\max}(q)].
\end{align}

\section{Numerical Simulations}

In this section we find approximate solutions to the following iterative optimal control problem in receding horizon:
\begin{equation*}
	\begin{array}{clll}
		\hspace{0cm} &V^{j,\star}(x_S)  = \\ [1ex]
		& \displaystyle\min_{u_0^{j},u_1^{j}(\cdot),\ldots}  \displaystyle\sum\limits_{t=0}^{T-1} 10\left \| \bar{x}_t^{j} - x_\mathrm{ref} \right\|^2_2 + 2 \left\| u_t^{j}(\bar{x}^j_t) \right\|^2_2  \\[1ex]
		& ~~~~\text{s.t.}\\
		& ~~~~~~~~~~~~x_{t+1}^{j} = Ax_t^{j} + Bu_t^{j}(x_t^{j}) + w_t^{j},\\
		& ~~~~~~~~~~~~\bar{x}_{t+1}^{j} = A\bar{x}_t^{j} + Bu_t^{j}(\bar{x}_t^{j}),\\[2.5ex]
		& ~~~~~~~~~~~\begin{bmatrix}-30 \\ -30 \\ -40
		\end{bmatrix} \leq \begin{bmatrix}x_t^{j} \\ u_t^{j}(x^j_t)
		\end{bmatrix} \leq \begin{bmatrix}30 \\ 30 \\ 40
		\end{bmatrix},\forall w_t^{j} \in \mathbb W, \\[3.5ex]
		& ~~~~~~~~~~~~ x_0^{j} = x_S,\ t=0,1,\ldots,T-1.
	\end{array}
\end{equation*}
We consider two parametric distributions:
\begin{subequations}
\begin{align}
\mathcal{P}^q_{\theta_q} & = \mathrm{Unif}(-3,3) ,\label{Eg:unif}\\
\mathcal{P}^q_{\theta_q} & = \mathcal{N}_\mathrm{trunc}(0,1,3), \label{Eg:gauss}
\end{align}
\end{subequations}
with $q \in \{1,2\}$. In both cases, $\mathbb{W} = [-3,3] \times [-3,3]$. We construct Bootstrap confidence intervals for the truncated normal case by resampling 1000 times. System matrices $A = \begin{bmatrix} 1.2 & 1.3 \\ 0 & 1.5 \end{bmatrix}$ and $B = [0,1]^\top$ are known. We solve the above optimization problem with the initial state $x_S = [0,0]^\top$ and reference point $x_\mathrm{ref} = [27,27]^\top$ for task duration $T=20~\text{steps}$ over $J=30$ iterations. Algorithm~\ref{alg:RER} is implemented with a control horizon of $N=4$, and the feedback gain $K$ in \eqref{eq:inputCL_DF} is chosen to be the optimal LQR gain for system $x^+ = (A+BK)x$ with parameters $Q_\mathrm{LQR}=10I_{2 \times 2}$ and $R_\mathrm{LQR} = 2$. The goal is to show:
\begin{itemize}
    \item Design specification (D1) is satisfied. Consequently, a lower probability of Disturbance Support Failure across all iterations using support $\hat{\mathbb{W}}^j$ from Algorithm~\ref{alg:RER}, compared to that from the convex hull support estimate $C^\mathrm{hull}(w^{1:j-1})$. 
    \item The performance loss $[\mathrm{PL}]^j$ rapidly approaches 0 within the first few iterations. However, in the initial iterations, there is a significant trade-off between a desired upper bound $\alpha$ on probability of State Constraint Failure and average closed loop cost $\mathbb{E}[\hat{\mathcal{V}}^j(x_S, w^{1:j})]$ (defined in \eqref{eq:actual_cl_cost_sim}). That is, lower the upper bound $\alpha$, higher is the average closed loop cost in the initial iterations. This suggests the need for tailoring the confidence level $(1-\alpha)$ in Algorithm~\ref{alg:RER} according to the application at hand.
\end{itemize}
\subsection{Bounding the Probability of Failure (D1)}
In this section, we demonstrate satisfaction of design specification (D1) by Algorithm~\ref{alg:RER} and compare the probability of Disturbance Support Failure $\mathbb{P}(w_{t}^j \notin \hat{\mathbb{W}}^j)$ for any timestep $t$ in the $j^\mathrm{th}$ iteration, with $\hat{\mathbb{W}}^j$ obtained using Algorithm~\ref{alg:RER} and $\hat{\mathbb{W}}^j = C^\mathrm{hull}(w^{1:j-1})$. This probability is estimated by averaging over 100 Monte Carlo draws of disturbance samples $w^{1:J}$, i.e., 
\begin{align*}
    \mathbb{P}(w^j_t \notin \mathbb{W}^j) \approx \frac{1}{100} \sum_{\tilde{m}=1}^{100} (\mathbf{1}_\mathcal{F}(w^j_t))^{\star\tilde{m}},
\end{align*}
where 
$$
(\mathbf{1}_\mathcal{F}(w^j_t))^{\star\tilde{m}} = 
\begin{cases}
1,~\textnormal{if } {w^j_t \notin (\hat{\mathbb{W}}^j)^{\star \tilde{m}} \vert (w^{1:j-1})^{\star \tilde{m}}}, \\
0,~\textnormal{otherwise},\\
\end{cases}
$$
and $(\cdot)^{\star \tilde{m}}$ represents the $\tilde{m}^{\mathrm{th}}$ Monte Carlo sample. Fig.~\ref{fig:fail_unif} shows this comparison for uniformly distributed disturbance \eqref{Eg:unif}. 
\begin{figure}[h]
    \centering
    \includegraphics[width=14cm]{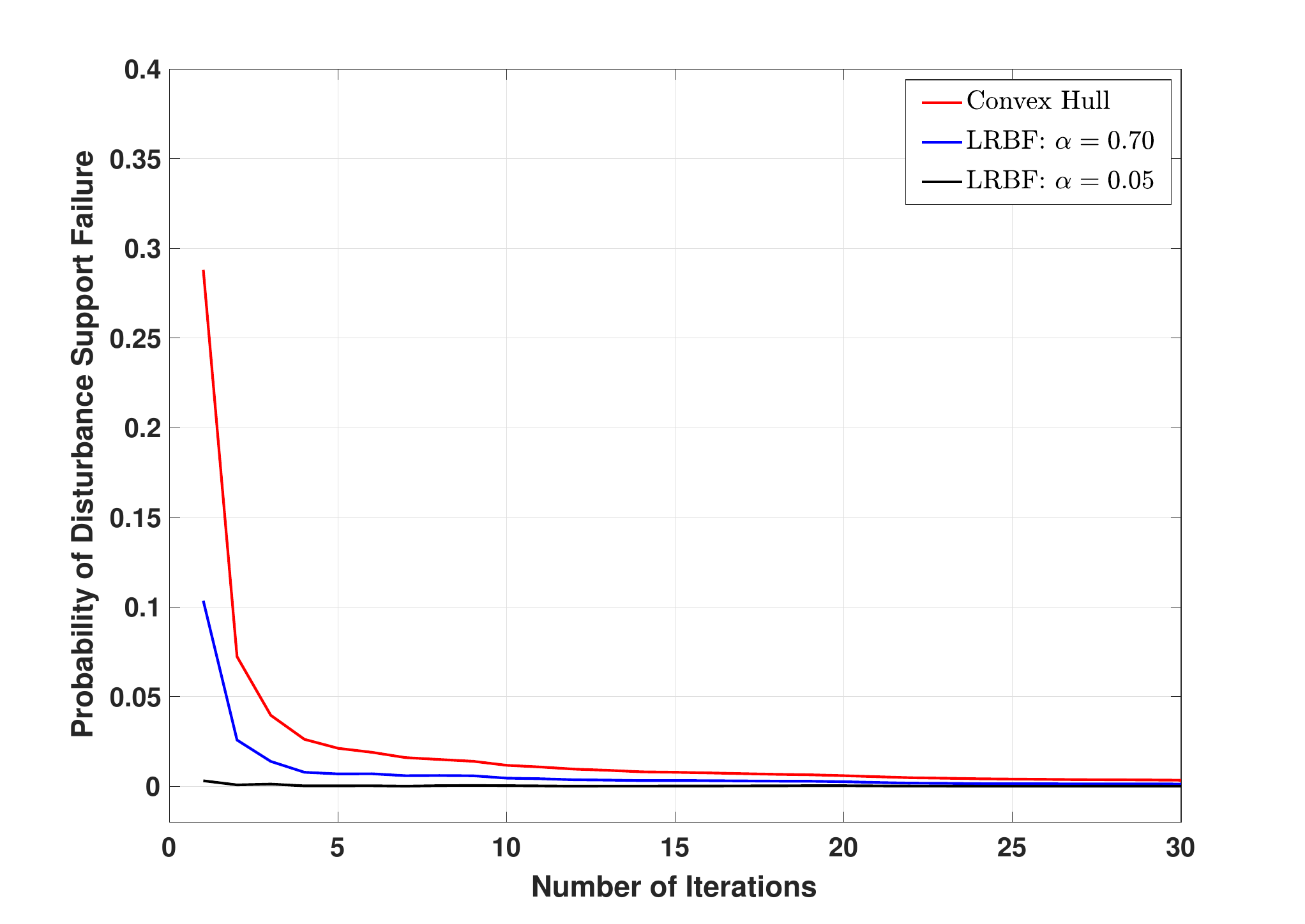}
    \caption{Probability of Disturbance Support Failure vs Iteration Number for Uniformly Distributed Disturbance on $\mathbb{W}$.}
    \label{fig:fail_unif}
\end{figure}
Using LRBF to construct Confidence Supports $\hat{\mathbb{W}}^j$ allows for lowering $\mathbb{P}(w_{t}^j \notin \hat{\mathbb{W}}^j)$, i.e., probability of $\mathrm{[DSF]}^j_t$ as defined in \eqref{eq:dsf} below a user specified bound $\alpha$, as opposed to simply utilizing $\hat{\mathbb{W}}^j = C^\mathrm{hull}(w^{1:j-1})$. 
\begin{figure}[h!]
    \centering
    \includegraphics[width=14cm]{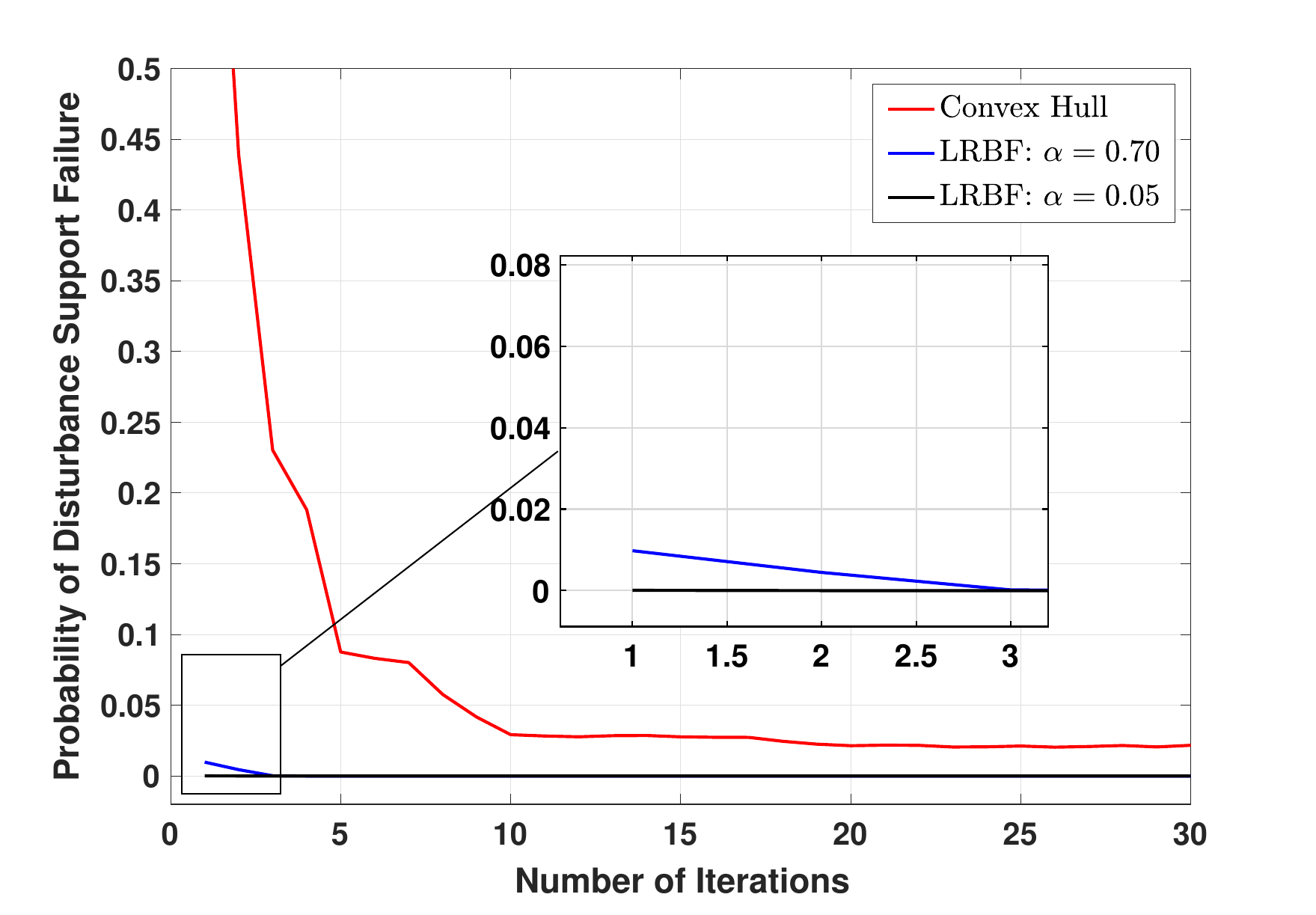}
    \caption{Probability of Disturbance Support Failure vs Iteration Number for Truncated Normal Distribution of Disturbance on $\mathbb{W}$.}
    \label{fig:fail_gauss}
\end{figure}
We plot the probability of $\mathrm{[DSF]}^j_t$  for 2 different values of $\alpha = 0.05$ and $\alpha = 0.70$. We see that for $\alpha = 0.05$ the probability of $\mathrm{[DSF]}^j_t$  with LRBF is on average $94\%$ smaller than that from the convex hull support estimate for all iterations $j \in [30]$. Similarly for $\alpha = 0.70$, the probability of $\mathrm{[DSF]}^j_t$  is on average $61 \%$ lower than that with the convex hull support estimate across all $j \in [30]$.

The same trend is seen in Fig.~\ref{fig:fail_gauss} for truncated normal distribution \eqref{Eg:gauss}, where probability of $\mathrm{[DSF]}^j_t$  is at least $99 \%$ and $96\%$ lower than convex hull support estimate for $\alpha = 0.05$ and $\alpha = 0.70$ respectively until iteration $j=3$, and reaches a value of $0$ for both values of $\alpha$ afterwards. The above trend in probability of $\mathrm{[DSF]}^j_t$  is explained by Proposition~\ref{thm:failureProbThm}, which relates the desired confidence $(1-\alpha)$ for support $\hat{\mathbb{W}}^j$ to the probability of $\mathrm{[DSF]}^j_t$ . Moreover, from Fig.~\ref{fig:fail_unif} and Fig.~\ref{fig:fail_gauss} we see that in practice probability of $\mathrm{[DSF]}^j_t$  is always at least $60\%-80 \%$ lower than corresponding chosen $\alpha$. This highlights satisfaction of (D1) and also the conservatism in Proposition~\ref{thm:failureProbThm} arising from the upper bound in \eqref{eq:confSupportUpperBd}. 


\subsection{Performance Loss Reduction Over Iterations}\label{sec:cost_sim}
In Fig.~\ref{fig:cost_unif} and Fig.~\ref{fig:cost_gauss}, we approximate the average closed loop cost $\mathbb{E}[{\hat{\mathcal{V}}}^j(x_S, w^{1:j})]$ of the $j^\mathrm{th}$ iteration by taking an empirical average over $100$ Monte Carlo draws of $w^{1:J}$ as,
\begin{align}\label{eq:emp_MCcost}
    \mathbb{E}[{\hat{\mathcal{V}}}^j(x_S,w^{1:j})] \approx \frac{1}{100} \sum_{\tilde{m}=1}^{100} \hat{\mathcal{V}}^j(x_S, (w^{1:j})^{\star \tilde{m}}),
\end{align}
for $\alpha =0.05$, and $ \alpha = 0.70$. The cost values are normalized by ${\mathcal{V}}^{\star}(x_S)$, which denotes the empirical average closed loop cost of the $j^\mathrm{th}$ iteration if $\mathbb{W}$ had been known, i.e., $\hat{\mathbb{W}}^j = \mathbb{W}$. For both cases of $\alpha$, we see that in Fig.~\ref{fig:cost_unif} and Fig.~\ref{fig:cost_gauss} the average closed loop cost rapidly approaches ${\mathcal{V}}^{\star}(x_S)$. For \eqref{Eg:unif} in Fig.~\ref{fig:cost_unif}, cost \eqref{eq:emp_MCcost}  approaches to within $0.5\%$ of ${\mathcal{V}}^{\star}(x_S)$ after just $5$ iterations whereas for \eqref{Eg:gauss} in Fig.~\ref{fig:cost_gauss}, it is within $3\%$ of ${\mathcal{V}}^{\star}(x_S)$ in the same duration. 

However, the average closed loop cost incurred in earlier iterations has a trade-off with desired $\alpha$.
This trade-off 
is also highlighted in Fig.~\ref{fig:cost_unif} and Fig.~\ref{fig:cost_gauss} for \eqref{Eg:unif} and \eqref{Eg:gauss} respectively.
We see from Fig.~\ref{fig:cost_unif} and Fig.~\ref{fig:cost_gauss} that for lower value of probability of  $\mathrm{[SCF]}^j_t$ with $\alpha = 0.05$, we pay a maximum of $13 \%$ higher average closed loop cost for \eqref{Eg:unif}, and a maximum of $10 \%$ higher average closed loop cost for \eqref{Eg:gauss} compared to ${\mathcal{V}}^{\star}(x_S)$ until iteration $j=5$. Allowing for higher probability of $\mathrm{[SCF]}^j_t$ with $\alpha=0.70$ proves to be cost-efficient, where we only pay a maximum of $0.3 \%$ higher average closed loop cost for \eqref{Eg:unif}, and a maximum of $4 \%$ higher average closed loop cost for \eqref{Eg:gauss} compared to ${\mathcal{V}}^{\star}(x_S)$ in the same duration. This essentially reflects the key trade-off between specifications (D1) and (D2) in the initial iterations. Thus, the upper bound  $\alpha$ of $\mathrm{[SCF]}^j_t$ must be chosen in an application-specific manner.

\begin{figure}[h]
    \centering
    \includegraphics[width=14cm]{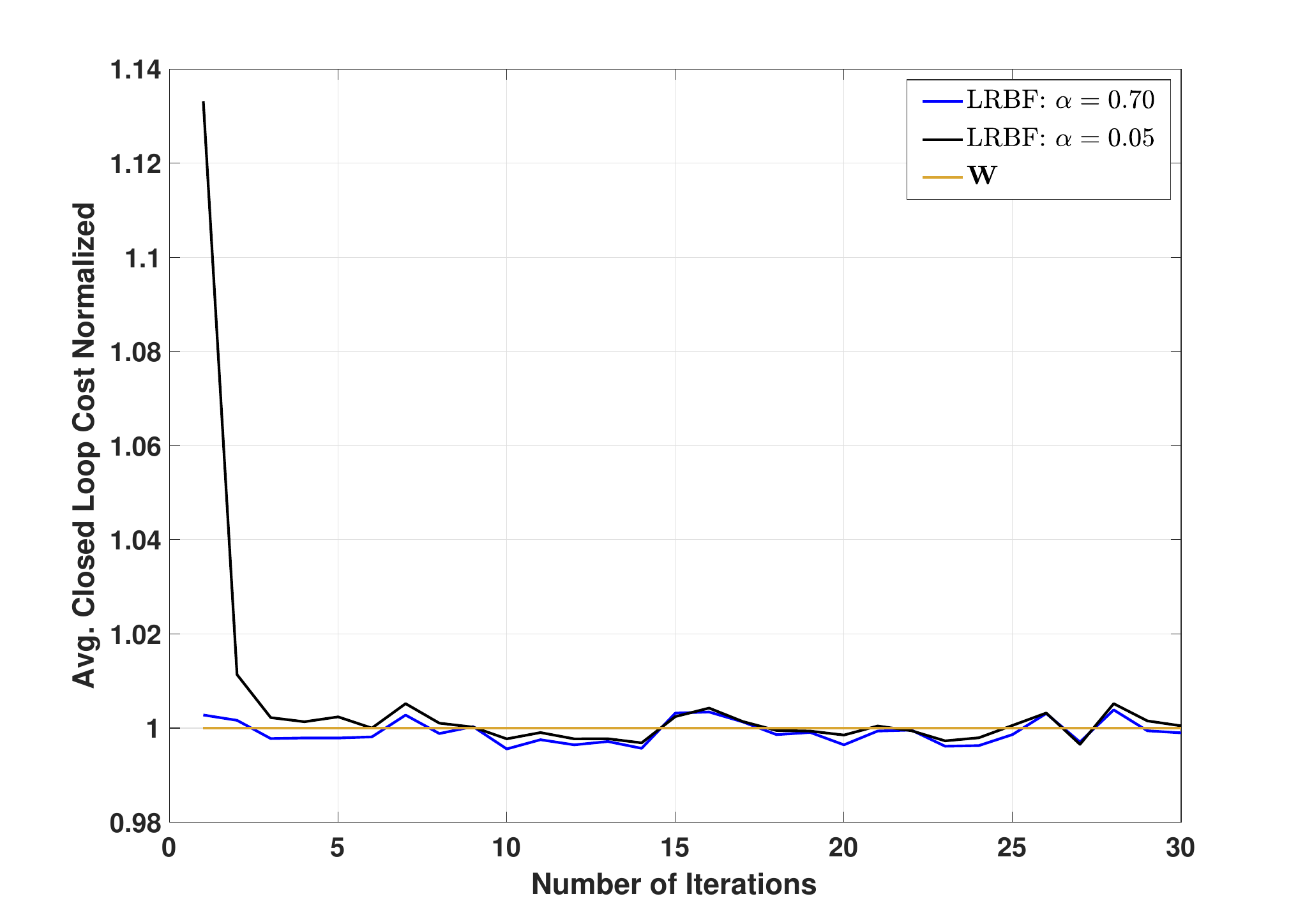}
    \caption{Normalized Average Closed Loop Cost \eqref{eq:emp_MCcost}: Uniform Disturbance.}
    \label{fig:cost_unif}
\end{figure}
\begin{figure}[h]
    \centering
    \includegraphics[width=14cm]{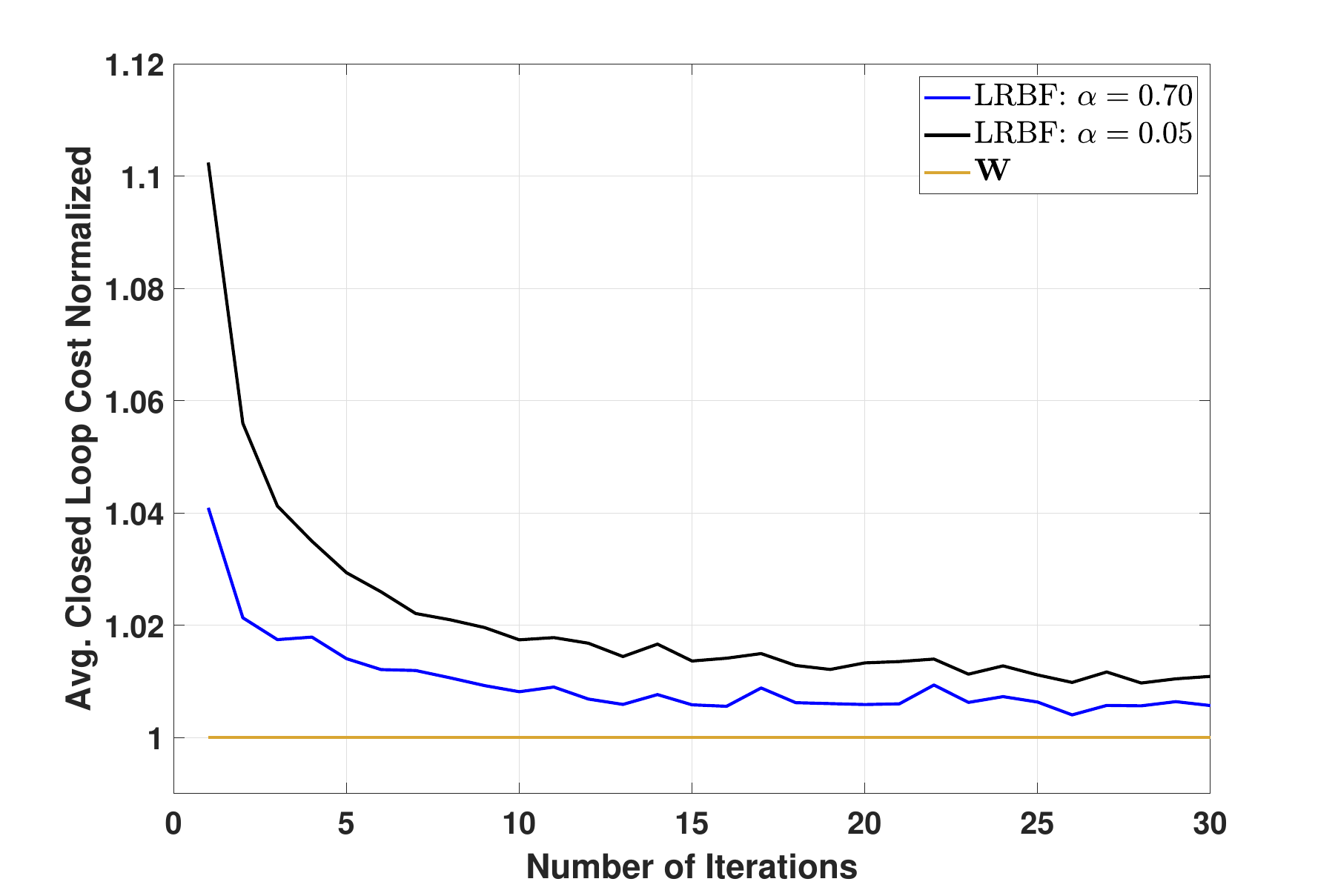}
    \caption{Normalized Average Closed Loop Cost \eqref{eq:emp_MCcost}: Truncated Normal Disturbance.}
    \label{fig:cost_gauss}
\end{figure}


\section*{Acknowledgement}
We thank Hamidreza Tavafoghi for helpful discussions and reviews. This work was partially funded by Office of Naval Research grant ONR-N00014-18-1-2833, National Science Foundation under grants EAGER-1549945 and CPS-1646612, and by the National Research Foundation of Singapore under a grant to the Berkeley Alliance for Research in Singapore.

\bibliographystyle{IEEEtran} 
\bibliography{bibliography}

\section*{Appendix}
\subsection*{Derivation of Confidence Support \eqref{eq:unifConfInterval}}
Consider $w^j_t(q) \stackrel{\mathrm{iid}}{\sim} \mathrm{Unif}(-\theta_q, \theta_q)$. This implies that $\frac{|w^j_t(q)|}{\theta_q} \stackrel{\mathrm{iid}}{\sim} \mathrm{Unif}(0,1)$. Let $\bar{w}^j(q) = \max_{\bar{w} \in w^{1:j-1}} |\bar{w}|$. Then, for any $c \in [0,1]$ we have,
\begin{align}
    \mathbb{P}\Bigg(\frac{\bar{w}^j(q)}{\theta_q} \leq c \Bigg) &= \mathbb{P}\Bigg(\mathop{\bigcap}_{\bar{w} \in w^{1:j-1} } \frac{|\bar{w}|}{\theta_q} \leq c \Bigg), \nonumber \\
    &= \Pi_{\bar{w} \in w^{1:j-1}} \mathbb{P}\Bigg( \frac{|\bar{w}|}{\theta_q} \leq c \Bigg), \label{eq:appIID} \\
    &= c^{\mathcal{T}^j} \nonumber,
\end{align}
where \eqref{eq:appIID} follows as disturbance samples $\bar{w} \in w^{1:j-1}$ are independent. Setting $c = \alpha_q^{\frac{1}{\mathcal{T}^j}}$, we have
\begin{align*}
    \mathbb{P}\Bigg(\frac{\bar{w}^j(q)}{\theta_q} \leq \alpha_q^{\frac{1}{\mathcal{T}^j}} \Bigg) &= \alpha_q. 
\end{align*}
Therefore, we have
\begin{align*}
    \mathbb{P}\Bigg(\alpha_q^{\frac{1}{\mathcal{T}^j}} \leq \frac{\bar{w}^j(q)}{\theta_q} \leq 1 \Bigg) = 1-\alpha_q,
\end{align*}
which gives us
\begin{align*}
    \mathbb{P}\Bigg(\bar{w}^j(q) \leq \theta_q \leq \frac{\bar{w}^j(q)}{\alpha_q^{\frac{1}{\mathcal{T}^j}}  }\Bigg) = 1 - \alpha_q.
\end{align*}
Setting $\alpha_q = \frac{\alpha}{d}$ and using Lemma \ref{lem:union_bd} completes the derivation.


\subsection*{Speeding up Convergence of $\hat{\mathbb{W}}^j$}
In order to speed up convergence of $\hat{\mathbb{W}}^j$ in Algorithm~\ref{alg:RER} to the true support $\mathbb{W}$, the following MPC optimization problem with slack variables is solved:
\begin{equation} \label{eq:MPC_R_fin_append}
	\begin{aligned}
	  &\tilde{V}_{t \rightarrow t+N}^{\mathrm{MPC},j}(x^j_t, \hat{\mathbb{W}}^j, \hat{\mathcal{X}}^j_N)  :=	\\
	& \min_{U^j_t(\cdot)} ~~ \sum_{k=t}^{t+N-1} \ell(\bar{x}^j_{k|t}, v^j_{k|t}) + Q(\bar{x}^j_{t+N|t}) + \Lambda \Vert \mathbf{s}^j_t \Vert_2^2\\
		& ~~~\text{s.t.}~~~~~~    x^j_{k+1|t} = Ax^j_{k|t} + Bu^j_{k|t} + w^j_{k|t},\\
		& ~~~~~~~~~~~~\bar{x}^j_{k+1|t} = A\bar{x}^j_{k|t} + Bv^j_{k|t},\\
		&~~~~~~~~~~~~u^j_{k|t} = \sum \limits_{l=t}^{k-1}M^j_{k,l|t} w^j_{l|t}  + v^j_{k|t},\\
		&~~~~~~~~~~~~ H_x x^j_{k|t} \leq h_x + s_t^j,~H_u u^j_{k|t} \leq h_u,\\
	    &~~~~~~~~~~~~ \hat{Y}^j{x}^j_{t+N|t} \leq \hat{z}^j + \hat{s}_t^j,\\
	    &~~~~~~~~~~~~\textrm{with } \hat{\mathcal{X}}^j_N = \{x: \hat{Y}^j x \leq \hat{z}^j\}, \\
	    &~~~~~~~~~~~~ \mathbf{s}_t^j = [(s_t^j)^\top, (\hat{s}_t^j)^\top]^\top \geq 0,\\
	    & ~~~~~~~~~~~~ \forall w^j_{k|t} \in \hat{\mathbb{W}}^j,\\
        &~~~~~~~~~~~~ \forall k = \{t,\ldots,t+N-1\},\\
        	&~~~~~~~~~~~~x^j_{t|t} = \bar{x}^j_{t|t} = x^j_t, \Lambda \gg 0, 
	\end{aligned}
\end{equation}
with $\mathbf{s}_0^j = 0$ (from Remark~\ref{rem:first_stepFeas}), and then closed loop control law $u^j_t = v^{j,\star}_{t|t}$ is applied to system \eqref{eq:unc_system}. By solving the relaxed optimization problem \eqref{eq:MPC_R_fin_append} which is feasible for all timesteps $0 \leq t \leq T-1$ in the $j^\mathrm{th}$ iteration, we ensure that after each iteration, a set of $T$ additional samples are obtained for the update $\hat{\mathbb{W}}^j \stackrel{\mathrm{update}}{\longrightarrow} \hat{\mathbb{W}}^{j+1}$.  From Section~\ref{sec:alg_section} we can infer that this speeds up the convergence of $\hat{\mathbb{W}}^j$.  


\end{document}